\def\@fnsymbol#1{\ensuremath{\ifcase#1\or\star\or \dagger\or{\star\star}\or
   {\star\star\star}\or \ddagger\or
   \mathchar "278\or \mathchar "27B\or \|\or **\or \dagger\dagger
   \or \ddagger\ddagger \else\@ctrerr\fi}}
\def\@setmcodes#1#2#3{{\count0=#1 \count1=#3
  \loop \global\mathcode\count0=\count1 \ifnum \count0<#2
  \advance\count0 by1 \advance\count1 by1 \repeat}}
\DeclareSymbolFont{italic}{OT1}{\rmdefault}{m}{it}
\let\mathit\undefined
\DeclareSymbolFontAlphabet{\mathit}{italic}
\edef\@tempa{\hexnumber@\symitalic}
\newdimen\asmindent     
\def\inc{\global\advance\asmi by 1}
\def\dec{\global\advance\asmi by-1}
\def\nl{{}$\par\hangindent\asmi em
  \noindent\kern\asmi em\ignorespaces$} 
\def\asmskip{{}$\par\smallskip\hangindent\asmi em
  \noindent\kern\asmi em\ignorespaces$} 
\def\asm{\global\asmi=0 
 \def\+{\inc\nl}
 \def\-{\dec\nl}
 \def\\{\nl}
 \begin{trivlist}\item[]\leftskip=\asmindent\relax$}
\def\endasm{$\end{trivlist}}
\def\asmarray{\begin{array}[t]{@{}l@{\;}l@{\;}l@{}}}
\def\endasmarray{\end{array}}
\def\subasm{\vtop\bgroup\asmii=0\normalbaselines
 \def\nl##1{$\egroup\advance\asmii by##1\relax\hbox\bgroup\hskip\asmii em$}
 \def\\{\nl{0}}
 \def\+{\nl{1}}
 \def\-{\nl{-1}}
 \hbox\bgroup\hskip\asmii em$}
\def\endsubasm{$\egroup\egroup}
\def\ASM#1{\hbox{\sc#1}}        
\def\AND     {\mathrel{\mathbf{and}}}
\def\CHOOSE  {\mathrel{\mathbf{choose}}}
\def\DO      {\mathrel{\mathbf{do}}}
\def\ELSE    {\mathrel{\mathbf{else}}}
\def\FORALL  {\mathrel{\mathbf{forall}}}
\def\FORSOME  {\mathrel{\mathbf{forsome}}}
\def\IF      {\mathrel{\mathbf{if}}}
\def\IN      {\mathrel{\mathbf{in}}}
\def\LET     {\mathrel{\mathbf{let}}}
\def\OR      {\mathrel{\mathbf{or}}}
\def\PAR     {\mathrel{\mathbf{par}}}
\def\SEQ     {\mathrel{\mathbf{seq}}}
\def\SKIP    {\mathrel{\mathbf{skip}}}
\def\THEN    {\mathrel{\mathbf{then}}}
\def\WHERE   {\mathrel{\mathbf{where}}}
\def\WITH    {\mathrel{\mathbf{with}}}
\def\SEQ    {\mathrel{\mathbf{seq}}}
\def\enumerate{%
  \ifnum \@enumdepth >\thr@@\@toodeep\else
    \advance\@enumdepth\@ne
    \edef\@enumctr{enum\romannumeral\the\@enumdepth}%
      \expandafter
      \list
        \csname label\@enumctr\endcsname
        {\usecounter\@enumctr\def\makelabel##1{\hss\llap{##1}}
         \itemsep 0pt\parskip 0pt\parsep 0pt\topsep\smallskipamount}%
  \fi}
\def\itemize{%
  \ifnum \@itemdepth >\thr@@\@toodeep\else
    \advance\@itemdepth\@ne
    \edef\@itemitem{labelitem\romannumeral\the\@itemdepth}%
    \expandafter
    \list
      \csname\@itemitem\endcsname
      {\def\makelabel##1{\hss\llap{##1}}
       \itemsep 0pt\parskip 0pt\parsep 0pt\topsep\smallskipamount}%
  \fi}
\begin{document}

\title{Specifying Transaction Control \\to Serialize Concurrent
  Program Executions\thanks{The research reported in this paper results
    from the project \textit{Behavioural Theory and Logics for
      Distributed Adaptive Systems} supported by the \textbf{Austrian
      Science Fund (FWF): [P26452-N15]}.}\thanks{The final publication is available at Springer via https://doi.org/10.1007/978-3-662-43652-3\_13.}}

\author{Egon B{\"o}rger\inst{1} and Klaus-Dieter Schewe \inst{2}}
\institute{Universit\`{a} di Pisa, Dipartimento di Informatica,
           I-56125 Pisa, Italy
           \email{boerger@di.unipi.it} \and 
           Software Competence Centre Hagenberg, A-4232 Hagenberg, Austria
           \email{klaus-dieter.schewe@scch.at}
           }

\maketitle 

\begin{abstract}
We define a programming language independent transaction controller
and an operator which when applied to concurrent programs with shared
locations turns their behavior with respect to some abstract
termination criterion into a transactional behavior. We prove the
correctness property that concurrent runs under the transaction
controller are serialisable. We specify the transaction controller
$\ASM{TaCtl}$ and the operator $TA$ in terms of Abstract State
Machines. This makes $\ASM{TaCtl}$ applicable to a wide range of
programs and in particular provides the possibility to use it as a
plug-in when specifying concurrent system components in terms of
Abstract State Machines.
\end{abstract}

\section{Introduction}

This paper is about the use of transactions as a common means to
control concurrent access of programs to shared locations and to avoid
that values stored at these locations are changed almost randomly. A
{\em transaction controller} interacts with concurrently running
programs (read: sequential components of an asynchronous system) to
control whether access to a shared location can be granted or not,
thus ensuring a certain form of consistency for these locations. A
commonly accepted consistency criterion is that the joint behavior of
all transactions (read: programs running under transactional control)
with respect to the shared locations is equivalent to a serial
execution of those programs. Serialisability guarantees that each
transaction can be specified independently from the transaction
controller, as if it had exclusive access to the shared locations.

It is expensive and cumbersome to specify transactional behavior and
prove its correctness again and again for components of the great
number of concurrent systems. Our goal is to define once and for all
an abstract (i.e. programming language independent) transaction
controller $\ASM{TaCtl}$ which can simply be ``plugged in'' to turn
the behavior of concurrent programs (read: components~$M$ of any given
asynchronous system $\cal M$) into a transactional one. This involves
to also define an operator~$TA(M,\ASM{TaCtl})$ which forces the
programs~$M$ to listen to the controller $\ASM{TaCtl}$ when trying to
access shared locations.

For the sake of generality we define the operator and the controller
in terms of Abstract State Machines (ASMs) which can be read and
understood as pseudo-code so that $\ASM{TaCtl}$ and the operator $TA$
can be applied to code written in any programming language (to be
precise: whose programs come with a notion of single step, the level
where our controller imposes shared memory access constraints to
guarantee transactional code behavior). On the other side, the precise
semantics underlying ASMs (for which we refer the reader
to~\cite{BoeSta03}) allows us to mathematically prove the correctness
of our controller and operator.

We concentrate here on transaction controllers that employ locking
strategies such as the common two-phase locking protocol (2PL). That
is, each transaction first has to acquire a (read- or write-) lock for
a shared location, before the access is granted. Locks are released
after the transaction has successfully committed and no more access to
the shared locations is necessary. There are of course other
approaches to transaction handling, see
e.g. \cite{elmasri:2006,gray:1993,kirchberg:2009,schewe:actac2000} and
the extensive literature there covering classical transaction control
for flat transactions, timestamp-based, optimistic and hybrid
transaction control protocols, as well as non-flat transaction models
such as sagas and multi-level transactions.

We define $\ASM{TaCtl}$ and the operator $TA$ in
Sect.~\ref{sect:TAoperator} and the $\ASM{TaCtl}$ components in
Sect.~\ref{sect:TaCtl}. In Sect.~\ref{sect:Thm} we prove the
correctness of these definitions.

\section{The Transaction Operator $TA(M,\ASM{TaCtl}$)}
\label{sect:TAoperator}

As explained above, a transaction controller performs the lock
handling, the deadlock detection and handling, the recovery mechanism
(for partial recovery) and the commit of single machines. Thus we
define it as consisting of four components specified in
Sect.~\ref{sect:TaCtl}.

\begin{asm}
\ASM{TaCtl}=\+
\ASM{LockHandler} \\
\ASM{DeadlockHandler} \\
\ASM{Recovery} \\
\ASM{Commit} 
\end{asm}

The operator~$TA(M,\ASM{TaCtl})$ transforms the components~$M$ of any
concurrent system (asynchronous ASM) ${\cal M} = (M_i)_{i \in
  I}$ into components of a concurrent system $TA({\cal M},\ASM{TaCtl})$ where
each $TA(M_i,\ASM{TaCtl})$ runs as transaction under the control
of~$\ASM{TaCtl}$:
\begin{asm}
TA({\cal M},\ASM{TaCtl})= ((TA(M_i,\ASM{TaCtl}))_{i \in I},\ASM{TaCtl})
\end{asm}

$\ASM{TaCtl}$ keeps a dynamic set $TransAct$ of those machines~$M$
whose runs it currently has to supervise to perform in a transactional
manner until~$M$ has $Terminated$ its transactional behavior (so that
it can $\ASM{Commit}$ it).\footnote{In this paper we deliberately keep
  the termination criterion abstract so that it can be refined in
  different ways for different transaction instances.}  To turn the
behavior of a machine~$M$ into a transactional one, first of all~$M$
has to register itself with the controller $\ASM{TaCtl}$, read: to be
inserted into the set of currently to be handled $TransAct$ions. To
$\ASM{Undo}$ as part of a recovery some steps~$M$ made already during
the given transactional run segment of~$M$, a last-in first-out queue
$history(M)$ is needed which keeps track of the states the
transactional run goes through; when~$M$ enters the set $TransAct$ the
$history(M)$ has to be initialized (to the empty queue).

The crucial transactional feature is that each non private
(i.e. shared or monitored or output) location~$l$ a machine~$M$ needs
to read or write for performing a step has to be $LockedBy(M)$ for
this purpose;~$M$ tries to obtain such locks by calling the
$\ASM{LockHandler}$. In case no $newLocks$ are needed by~$M$ in its
$currState$ or the needed $newLocks$ can be $Granted$ by the
$\ASM{LockHandler}$, $M$ performs its next step; in addition, for a
possible future recovery, the machine has to $\ASM{Record}$ in its
$history(M)$ the current values of those locations which are (possibly
over-) written by this~$M$-step together with the obtained
$newLocks$. Then~$M$ continues its transactional behavior until it is
$Terminated$. In case the needed $newLocks$ are $Refused$, namely
because another machine~$N$ in $TransAct$ for some needed~$l$ has
$W\mbox{-}Locked(l,N)$ or (in case~$M$ wants a W-(rite)Lock) has
$R\mbox{-}Locked(l,N)$, $M$ has to $Wait$ for~$N$; in fact it
continues its transactional behavior by calling again the
$\ASM{LockHandler}$ for the needed $newLocks$---until the needed
locked locations are unlocked when~$N$'s transactional behavior is
$\ASM{Commit}$ed, whereafter a new request for these locks this time
may be $Granted$ to~$M$.\footnote{As suggested by a reviewer, a
  refinement (in fact a desirable optimization) consists in replacing
  such a waiting cycle by suspending~$M$ until the needed locks are
  released. Such a refinement can be obtained in various ways, a
  simple one consisting in letting~$M$ simply stay in $waitForLocks$
  until the $newLocks$ $CanBeGranted$ and refining $\ASM{LockHandler}$
  to only choose pairs $(M,L)\in LockRequest$ where it can
  $\ASM{GrantRequestedLocks}(M,L)$ and doing nothing otherwise
  (i.e. defining $\ASM{RefuseRequestedLocks}(M,L)=~\SKIP$). See
  Sect.~\ref{sect:TaCtl}.}

As a consequence deadlocks may occur, namely when a cycle occurs in
the transitive closure $Wait^*$ of the $Wait$ relation.  To resolve
such deadlocks the $\ASM{DeadlockHandler}$ component of $\ASM{TaCtl}$
chooses some machines as $Victim$s for a recovery.\footnote{To
  simplify the serializability proof in Sect.\ref{sect:TaCtl} and
  without loss of generality we define a reaction of machines~$M$ to
  their victimization only when they are in $ctl\_state(M)=~$TA-$ctl$
  (not in $ctl\_state(M)=waitForLocks$). This is to guarantee that no
  locks are $Granted$ to a machine as long as it does
  $waitForRecovery$.} After a victimized machine~$M$ is $Recovered$ by
the $\ASM {Recovery}$ component of $\ASM{TaCtl}$, so that~$M$ can exit
its $waitForRecovery$ state, it continues its transactional behavior.
 
This explains the following definition of $TA(M,\ASM{TaCtl})$ as a
control state ASM, i.e. an ASM with a top level Finite State Machine
control structure. We formulate it by the flowchart diagram of
Fig.~\ref{fig:TA(M,C)}, which has a precise control state ASM
semantics (see the definition in~\cite[Ch.2.2.6]{BoeSta03}). The
components for the recovery feature are highlighted in the flowchart
by a colouring that differs from that of the other components. The
macros which appear in Fig.~\ref{fig:TA(M,C)} and the components of
$\ASM{TaCtl}$ are defined below.

\begin{figure}[htb]
  \begin{center}
  \includegraphics[width=0.9\textwidth]{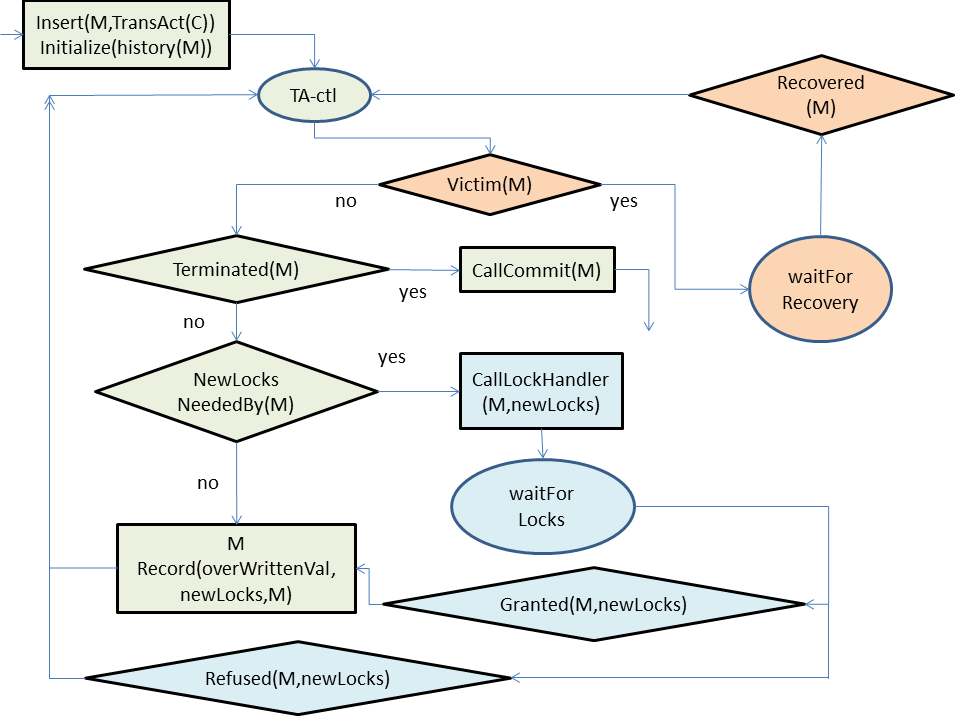}
  \end{center}
  \caption{TA(M,C)}
  \label{fig:TA(M,C)}
\end{figure}

The predicate $NewLocksNeededBy(M)$ holds if in the current state
of~$M$ at least one of two cases
happens:\footnote{See~\cite[Ch.2.2.3]{BoeSta03} for the classification
  of locations and functions.} either $M$ to perform its step in this
state reads some shared or monitored location which is not yet
$LockedBy(M)$ or~$M$ writes some shared or output location which is
not yet $LockedBy(M)$ for writing. A location can be $LockedBy(M)$ for
reading ($R\mbox{-}Locked(l,M)$) or for writing
($W\mbox{-}Locked(l,M)$).  Formally:
\begin{asm}
NewLocksNeededBy(M)=\+
   newLocks(M,currState(M))\footnote{
   For layout reasons we omit in Fig.\ref{fig:TA(M,C)} the arguments of the
   functions $newLocks$ and $overWrittenVal$.}\not = (\emptyset,\emptyset) \-
newLocks(M,currState(M))\footnote{By the second argument 
  $currState(M)$ of $newLocks$ (and below of $overWrittenVal$) we indicate that this 
  function of~$M$ is a dynamic function which is evaluated in each 
  state of~$M$, namely by computing in this state the sets $ReadLoc(M)$ and 
  $WriteLoc(M)$; see Sect.~\ref{sect:Thm} for the detailed definition.}
                 =(R\mbox{-}Loc,W\mbox{-}Loc)\+
   \WHERE \\
   R\mbox{-}Loc = ReadLoc(M,currState(M)) \cap (SharedLoc(M) \cup MonitoredLoc(M))\+
           ~~~~~~~~~~~~~~~~\cap \overline{LockedBy(M)}
                 \footnote{By $\overline{X}$ we denote the complement of~$X$.}\-
   W\mbox{-}Loc = WriteLoc(M,currState(M)) \cap (SharedLoc(M) \cup OutputLoc(M))\+
           ~~~~~~~~~~~~~~~~\cap \overline{W\mbox{-}LockedBy(M)}\dec\-
LockedBy(M)=  \{l \mid R\mbox{-}Locked(l,M) \OR W\mbox{-}Locked(l,M)\}\\
 W\mbox{-}LockedBy(M)=  \{l \mid  W\mbox{-}Locked(l,M)\}
\end{asm}

The $overWrittenVal$ues are the $currState(M)$-values (retrieved by
the $eval$-function) of those shared or output locations $(f,args)$
which are written by~$M$ in its $currState(M)$. To $\ASM{Record}$ the
set of these values together with the obtained $newLocks$ means to
append the pair of these two sets to the $history$ queue of~$M$ from
where upon recovery the values and the locks can be retrieved.

\begin{asm}
overWrittenVal(M,currState(M))= ~ \{((f,args),val) \mid \+
  (f,args) \in WriteLoc(M,currState(M))\cap (SharedLoc(M) \cup OutputLoc(M))\\ 
  \AND val = eval(f(args),currState(M))\} \-
\ASM{Record}(valSet,lockSet,M) = ~ \ASM{Append}((valSet,lockSet),history(M))
\end{asm}

To $\ASM{CallLockHandler}$ for the $newLocks$ requested by~$M$ in its
$currState(M)$ means to $\ASM{Insert}(M,newLocks)$ into the
$\ASM{LockHandler}$'s set of to be handled $LockRequest$s. Similarly
we let $\ASM{CallCommit(M)}$ stand for insertion of~$M$ into a set
$CommitRequest$ of the $\ASM{Commit}$ component.

\begin{asm}
\ASM{CallLockHandler}(M,L)=~\ASM{Insert}((M,L),LockRequest)\\
\ASM{CallCommit}(M)=~\ASM{Insert}(M,CommitRequest)
\end{asm}

\section{The Transaction Controller Components}
\label{sect:TaCtl}

A $\ASM{CallCommit(M)}$ by machine~$M$ enables the $\ASM{Commit}$
component. Using the $\CHOOSE$ operator we leave the order in which
the $CommitRequest$s are handled refinable by different instantiations
of $\ASM{TaCtl}$.

$\ASM{Commit}$ing~$M$ means to $\ASM{Unlock}$ all locations~$l$ that
are $LockedBy(M)$. Note that each lock obtained by~$M$ remains
with~$M$ until the end of~$M$'s transactional behavior. Since~$M$
performs a $\ASM{CallCommit(M)}$ when it has $Terminated$ its
transactional computation, nothing more has to be done to
$\ASM{Commit}$ $M$ besides deleting~$M$ from the sets of
$CommitRequest$s and still to be handled $TransAct$ions.\footnote{We
  omit clearing the $history(M)$ queue since it is initialized
  when~$M$ is inserted into $TransAct(\ASM{TaCtl})$.}

Note that the locations $R\mbox{-}Locked(l,M)$ and
$W\mbox{-}Locked(l,M)$ are shared by the $\ASM{Commit}$,
$\ASM{LockHandler}$ and $\ASM{Recovery}$ components, but these
components never have the same~$M$ simultaneously in their request
resp. $Victim$ set since when machine~$M$ has performed a
$\ASM{CallCommit(M)}$, it has $Terminated$ its transactional
computation and does not participate any more in any $(M,L) \in
LockRequest$ or $Victim$ization.

\begin{asm}
\ASM{Commit} =\+
  \IF CommitRequest \not = \emptyset \THEN \+
     \CHOOSE M \in CommitRequest ~\ASM{Commit}(M) \-
  \WHERE \+
    \ASM{Commit}(M) = \+
       \FORALL l \in LockedBy(M) ~~ \ASM{Unlock}(l,M) \\
       \ASM{Delete}(M,CommitRequest)\\
       \ASM{Delete}(M,TransAct)\-
    \ASM{Unlock}(l,M)= \+
      \IF R\mbox{-}Locked(l,M) \THEN  R\mbox{-}Locked(l,M):=false  \\
      \IF W\mbox{-}Locked(l,M) \THEN  W\mbox{-}Locked(l,M):=false 
\end{asm}

As for $\ASM{Commit}$ also for the $\ASM{LockHandler}$ we use the
$\CHOOSE$ operator to leave the order in which the $LockRequest$s are
handled refinable by different instantiations of $\ASM{TaCtl}$.

The strategy we adopt for lock handling is to refuse all locks for
locations requested by~$M$ if at least one of the following two cases
happens:

\begin{itemize}
\item some of the requested locations is $W\mbox{-}Locked$ by another
  transactional machine~$N \in TransAct$,
\item some of the requested locations is a $WriteLoc$ation that is
  $R\mbox{-}Locked$ by another transactional machine~$N \in TransAct$.
\end{itemize}

 This definition implies that multiple transactions may
 simultaneoulsy have a $R\mbox{-}Lock$ on some location. It is specified below by the predicate
 $CannotBeGranted$. 

To $\ASM{RefuseRequestedLocks}$ it suffices to set the communication
 interface $Refused$ of $TA(M,\ASM{TaCtl})$; this makes~$M$ $Wait$ for
 each location~$l$ that is $ W\mbox{-}Locked(l,N)$ and for each
 $WriteLoc$ation that is $R\mbox{-}Locked(l,N)$ by some other transactional
 component machine $N \in TransAct$.

\begin{asm}
\ASM{LockHandler} =\+
\IF LockRequest \not = \emptyset \THEN \+
   \CHOOSE (M,L) \in LockRequest  \+
       \ASM{HandleLockRequest}(M,L)\dec\-
\WHERE \+
 \ASM{HandleLockRequest}(M,L) = \+
      \IF CannotBeGranted(M,L) \+
         \THEN  ~ \ASM{RefuseRequestedLocks}(M,L)\\
         \ELSE  ~ \ASM{GrantRequestedLocks}(M,L) \-
      \ASM{Delete}((M,L),LockRequest)\-         
 CannotBeGranted(M,L)=\+
    \LET L=(R\mbox{-}Loc,W\mbox{-}Loc), Loc= R\mbox{-}Loc \cup W\mbox{-}Loc\+
       \FORSOME l \in Loc ~~\FORSOME N \in TransAct \setminus \{M\}\+ 
          W\mbox{-}Locked(l,N) \OR \+
               (l \in W\mbox{-}Loc \AND R\mbox{-}Locked(l,N))\dec\dec\-
 \ASM{RefuseRequestedLocks}(M,L) =  (Refused(M,L):=true)\\
 \ASM{GrantRequestedLocks}(M,L)=\+
    \LET L=(R\mbox{-}Loc,W\mbox{-}Loc)\+
       \FORALL l \in R\mbox{-}Loc ~~(R\mbox{-}Locked(l,M):=true)\\    
       \FORALL l \in W\mbox{-}Loc ~~(W\mbox{-}Locked(l,M):=true)\- 
    Granted(M,L):=true   
\end{asm}

A $Deadlock$ originates if two machines are in a $Wait$ cycle,
otherwise stated if for some (not yet $Victim$ized) machine~$M$ the
pair $(M,M)$ is in the transitive (not reflexive) closure $Wait^*$ of
$Wait$. In this case the $\ASM{DeadlockHandler}$ selects for recovery
a (typically minimal) subset of $Deadlocked$ transactions
$toResolve$---they are $Victim$ized to $waitForRecovery$, in which
mode (control state) they are backtracked until they become
$Recovered$. The selection criteria are intrinsically specific for
particular transaction controllers, driving a usually rather complex
selection algorithm in terms of number of conflict partners,
priorities, waiting time, etc.  In this paper we leave their
specification for $\ASM{TaCtl}$ abstract (read: refinable in different
directions) by using the $\CHOOSE$ operator.

\begin{asm}
\ASM{DeadlockHandler} =\+
\IF Deadlocked \cap \overline{Victim} \not = \emptyset \THEN 
                  \mbox{ // there is a Wait cycle}\+
    \CHOOSE toResolve \subseteq Deadlocked \cap \overline{Victim} \+  
      \FORALL M \in toResolve~ Victim(M):=true\dec\-  
\WHERE \+
Deadlocked = \{M \mid (M,M) \in M^*\}\\
M^*= \mbox{ TransitiveClosure}(Wait)\\
Wait(M,N) = ~ \FORSOME l ~ Wait(M,l,N) \\
Wait(M,l,N) = \+
     l \in newLocks(M,currState(M)) \AND   N \in TransAct \setminus \{M\}  
            \AND \\       
     W\mbox{-}Locked(l,N) \OR (l \in W\mbox{-}Loc \AND R\mbox{-}Locked(l,N))\+
           \WHERE newLocks(M,currState(M)) =(R\mbox{-}Loc,W\mbox{-}Loc)\dec\-
\end{asm}

Also for the $\ASM{Recovery}$ component we use the $\CHOOSE$ operator
to leave the order in which the $Victim$s are chosen for recovery
refinable by different instantiations of $\ASM{TaCtl}$. To be
$Recovered$ a machine~$M$ is backtracked by $\ASM{Undo}(M)$ steps
until $M$ is not $Deadlocked$ any more, in which case it is
deleted from the set of $Victim$s, so that be definition it is
$Recovered$. This happens at the latest when $history(M)$ has become
empty.

\begin{asm}
\ASM{Recovery} =\+
   \IF Victim \not = \emptyset \THEN \+
      \CHOOSE M \in Victim ~\ASM{TryToRecover}(M)\dec \-          
\WHERE \+
\ASM{TryToRecover}(M) = \+
     \IF M \not \in Deadlocked \THEN Victim(M):=false \+
                \ELSE ~ \ASM{Undo}(M)\dec\-
Recovered = \+
   \{M \mid ctl\mbox{-}state(M)=waitForRecovery \AND M \not \in Victim\}\-
\ASM{Undo}(M)= \+
  \LET (ValSet,LockSet) = youngest(history(M))\+
     \ASM{Restore}(ValSet)\\
     \ASM{Release}(LockSet)\\
     \ASM{Delete}((ValSet,LockSet),history(M))\dec\-
\WHERE \+
   \ASM{Restore}(V) =\+
       \FORALL ((f,args),v) \in V ~ f(args):=v \-
   \ASM{Release}(L)= \+
      \LET L = (R\mbox{-}Loc,W\mbox{-}Loc)\+
        \FORALL l \in Loc=R\mbox{-}Loc \cup W\mbox{-}Loc ~ \ASM{Unlock}(l,M)
\end{asm}

Note that in our description of the \ASM{DeadlockHandler} and the
(partial) \ASM{Recovery} we deliberately left the strategy for victim
seclection and $\ASM{Undo}$ abstract leaving fairness considerations to be
discussed elsewhere. It is clear that if always the same victim is
selected for partial recovery, the same deadlocks may be created again
and again. However, it is well known that fairness can be achieved by
choosing an appropriate victim selection strategy.

\section{Correctness Theorem}
\label{sect:Thm}

In this section we show the desired correctness property: if all
monitored or shared locations of any $M_i$ are output or controlled
locations of some other $M_j$ and all output locations of any $M_i$
are monitored or shared locations of some other $M_j$ (closed system
assumption)\footnote{This assumption means that the environment is
  assumed to be one of the component machines.}, each run
of $TA({\cal M},\ASM{TaCtl})$ is equivalent to a serialization of the
terminating $M_i$-runs, namely the $M_{i_{1}}$-run followed by the
$M_{i_{2}}$-run etc., where $M_{i_{j}}$ is the $j$-th machine of $\cal
M$ which performs a commit in the $TA({\cal M},\ASM{TaCtl})$ run. To
simplify the exposition (i.e. the formulation of statement and proof
of the theorem) we only consider machine steps which take place under
the transaction control, in other words we abstract from any
step~$M_i$ makes before being $\ASM{Insert}$ed into or after being
$\ASM{Delet}$ed from the set $TransAct$ of machines which currently
run under the control of $\ASM{TaCtl}$.

First of all we have to make precise what a {\em serial} multi-agent
ASM run is and what {\em equivalence} of $TA({\cal M},\ASM{TaCtl})$
runs means in the general multi-agent ASM framework.

\subsubsection{Definition of run equivalence.}

Let $S_0, S_1, S_2, \dots$ be a (finite or infinite) run of the system $TA({\cal
  M},\ASM{TaCtl})$. In general we may assume that \ASM{TaCtl} runs
  forever, whereas each machine $M \in \mathcal{M}$ running as
  transaction will be terminated at some time -- at least after commit
  $M$ will only change values of non-shared and non-output
  locations\footnote{It is possible that one ASM $M$ enters several
    times as a transaction controlled by \ASM{TaCtl}. However, in this
    case each of these registrations will be counted as a separate
    transaction, i.e. as different ASMs in $\mathcal{M}$.}. For $i =
  0,1,2,\dots$ let $\Delta_i$ denote the unique, consistent update set
  defining the transition from $S_i$ to $S_{i+1}$. By definition of $TA({\cal
    M},\ASM{TaCtl})$ the update set is the union of the update sets of
  the agents executing $M \in \mathcal{M}$ resp. $\ASM{TaCtl}$:
\[ \Delta_i = \bigcup\limits_{M \in \mathcal{M}} \Delta_i(M) 
      \cup \Delta_i(\ASM{TaCtl}). \]

\noindent $\Delta_i(M)$ contains the updates defined by the ASM
$TA(M,\ASM{TaCtl})$ in state $S_i$\footnote{We use the shorthand
  notation $\Delta_i(M)$ to denote $\Delta_i(TA(M,\ASM{TaCtl}))$; in
  other words we speak about steps and updates of~$M$ also when they
  really are done by~$TA(M,\ASM{TaCtl})$. Mainly this is about
  transitions between the control states, namely TA-$ctl$,
  $waitForLocks$, $waitForRecovery$ (see Fig.\ref{fig:TA(M,C)}), which
  are performed during the run of~$M$ under the control of the
  transaction controller $\ASM{TaCtl}$. When we want to name an
  original update of~$M$ (not one of the updates of $ctl\_state(M)$ or
  of the $\ASM{Record}$ component) we call it a proper $M$-update.}
and $\Delta_i(\ASM{TaCtl})$ contains the updates by the transaction
controller in this state. The sequence of update sets $\Delta_0(M)$,
$\Delta_1(M)$, $\Delta_2(M)$, \dots will be called the {\em schedule}
of $M$ (for the given transactional run).

To generalise for transactional ASM runs the equivalence of
transaction schedules known from database systems
\cite[p.621ff.]{elmasri:2006} we now define two {\em cleansing
  operations} for ASM schedules. By the first one (i) we eliminate all
(in particular unsuccessful-lock-request) computation segments which
are without proper $M$-updates; by the second one (ii) we eliminate
all $M$-steps which are related to a later $\ASM{Undo}(M)$ step by the
$\ASM{Recovery}$ component:

\begin{enumerate}

\item Delete from the schedule of $M$ each $\Delta_i(M)$ where one
  of the following two properties holds:

    \begin{itemize}
    \item $\Delta_i(M)=\emptyset$ ($M$ contributes no update to $S_i$),
   \item $\Delta_i(M)$ belongs to a step of an $M$-computation segment
     where~$M$ in its $ctl\_state(M)=$ TA-$ctl$ does
     $\ASM{CallLockHandler}(M,newLocks)$ and in its next step moves
     from control-state $waitForLocks$ back to control state TA$-ctl$, because the
     $\ASM{LockHandler}$ refused new locks by $Refused(M,newLocks)$.\footnote{Note that by
       eliminating this $\ASM{CallLockHandler}(M,L)$ step also the
       corresponding $\ASM{LockHandler}$ step
       $\ASM{HandleLockRequest}(M,L)$ disappears in the run.}
     \end{itemize}
In such computation steps~$M$ makes no proper update.

\item Repeat choosing from the schedule of $M$ a pair
  $\Delta_j(M)$ with later $\Delta_{j'}(M)$ ($j<j'$) which belong
  to the first resp. second of two consecutive $M$-Recovery steps
  defined as follows:

  \begin{itemize}
    \item a (say $M$-RecoveryEntry) step whereby~$M$ in state $S_j$
      moves from control-state TA-$ctl$ to $waitForRecovery$, because it became a
      $Victim$,
    \item the next $M$-step (say $M$-RecoveryExit) whereby~$M$ in
      state $S_{j'}$ moves back to control state TA-$ctl$ because it has
      been $Recovered$.
  \end{itemize} 
In these two $M$-Recovery steps~$M$ makes no proper update. Delete:
\begin{enumerate}
\item $\Delta_j(M)$ and $\Delta_{j'}(M)$, 
\item the $((Victim,M),true)$ update from the corresponding
  $\Delta_t(\ASM{TaCtl})$ ($t< j$) which in state $S_j$ triggered the
  $M$-RecoveryEntry,
\item $\ASM{TryToRecover}(M)$-updates in any update set
  $\Delta_{i+k}(\ASM{TaCtl})$ between
  the considered $M$-RecoveryEntry and $M$-RecoveryExit step
  ($i<j<i+k<j'$),
\item each $\Delta_{i'}(M)$ belonging to the $M$-computation segment
  from TA-$ctl$ back to TA-$ctl$ which contains the
  proper $M$-step in $S_i$ that is $\ASM{UNDO}$ne in $S_{i+k}$ by the
  considered $\ASM{TryToRecover}(M)$ step; besides 
     control state and $\ASM{Record}$ updates these $\Delta_{i'}(M)$
     contain updates $(\ell,v)$ with $\ell =
     (f,(val_{S_i}(t_1),\dots,val_{S_i}(t_n)))$ where the
     corresponding $\ASM{Undo}$ updates are
     $(\ell,val_{S_i}(f(t_1,\dots,t_n))) \in
     \Delta_{i+k}(\ASM{TaCtl})$,
 \item the $\ASM{HandleLockRequest}(M,newLocks)$-updates in
   $\Delta_{l\prime}(\ASM{TaCtl})$ corresponding to $M$'s
   $\ASM{CallLockHandler}$ step (if any: in case $newLocks$ are needed
   for the proper $M$-step in $S_i$) in state $S_l$ ($l<l^\prime<i$).
 \end{enumerate}

\end{enumerate}

The sequence $\Delta_{i_1}(M), \Delta_{i_2}(M), \dots$ with $i_1 < i_2
< \dots$ resulting from the application of the two cleansing
operations as long as possible -- note that confluence is obvious, so
the sequence is uniquely defined -- will be called the {\em cleansed
  schedule} of $M$ (for the given run).

Before defining the equivalence of transactional ASM runs we
remark that $TA({\cal M},\ASM{TaCtl})$ has indeed several runs, even
for the same initial state $S_0$. This is due to the fact that a lot
of non-determinism is involved in the definition of this ASM. First,
the submachines of \ASM{TaCtl} are non-deterministic:

\begin{itemize}

\item In case several machines $M, M^{\prime} \in \mathcal{M}$ request
  conflicting locks at the same time, the \ASM{LockHandler} can only
  grant the requested locks for one of these machines.

\item Commit requests are executed in random order by the \ASM{Commit} submachine.

\item The submachine \ASM{DeadlockHandler} chooses a set of victims,
  and this selection has been deliberately left abstract.

\item The \ASM{Recovery} submachine chooses in each step a victim $M$,
  for which the last step will be undone by restoring previous values
  at updated locations and releasing corresponding locks.

\end{itemize}

Second, the specification of $TA({\cal M},\ASM{TaCtl})$ leaves
deliberately open, when a machine $M \in \mathcal{M}$ will be started,
i.e., register as a transaction in $TransAct$ to be controlled by
\ASM{TaCtl}. This is in line with the common view that transactions $M
\in \mathcal{M}$ can register at any time to the transaction
controller \ASM{TaCtl} and will remain under its control until they
commit.

\begin{definition}\rm

\ Two runs $S_0,S_1,S_2,\dots$ and
$S_0^{\prime},S_1^{\prime},S_2^{\prime},\dots$ of $TA({\cal
  M},\ASM{TaCtl})$ are {\em equivalent} iff for each $M \in
\mathcal{M}$ the cleansed schedules $\Delta_{i_1}(M), \Delta_{i_2}(M),
\dots$ and $\Delta_{j_1}^\prime(M), \Delta_{j_2}^\prime(M), \dots$ for
the two runs are the same and the read locations and the values read
by~$M$ in $S_{i_k}$ and $S_{j_k}'$ are the same.

\end{definition}

That is, we consider runs to be equivalent, if all 
transactions $M \in \mathcal{M}$ read the same locations and see there
the same values and perform the same updates in the same order
disregarding waiting times and updates that are undone.

\subsubsection{Definition of serializability.} 

Next we have to clarify our generalised notion of a serial run, for
which we concentrate on committed transactions -- transactions that
have not yet committed can still undo their updates, so they must be
left out of consideration\footnote{Alternatively, we could concentrate
  on complete, infinite runs, in which only committed transactions
  occur, as eventually every transaction will commit -- provided that
  fairness can be achieved.}. We need a definition of the read- and
write-locations of $M$ in a state $S$, i.e. $ReadLoc(M,S)$ and
$WriteLoc(M,S)$ as used in the definition of $newLocks(M,S)$.

The definition of $Read/WriteLoc$ depends on the locking level,
whether locks are provided for variables, pages, blocks, etc. To
provide a definite definition, in this paper we give the definition at
the level of abstraction of the locations of the underlying class
$\cal{M}$ of component machines (ASMs)~$M$. Refining this definition
(and that of $newLocks$) appropriately for other locking levels does
not innvalidate the main result of this paper.

We define $ReadLoc(M,S) = ReadLoc(r,S)$, where $r$ is the defining rule of
the ASM $M$, and analogously $WriteLoc(M,S)$ $= WriteLoc(r,S)$. Then we use structural induction according to the
definition of ASM rules in ~\cite[Table 2.2]{BoeSta03}. As an
auxiliary concept we need to define inductively the read and write
locations of terms and formulae. The definitions use an
interpretation~$I$ of free variables which we suppress notationally
(unless otherwise stated) and assume to be given with (as environment
of) the state~$S$. This allows us to write $ReadLoc(M,S)$,
$WriteLoc(M,S)$ instead of $ReadLoc(M,S,I)$, $ReadLoc(M,S,I)$
respectively.

\subsubsection{Read/Write Locations of Terms and Formulae.}

For state~$S$ let~$I$ be the given interpretation of the variables
which may occur freely (in given terms or formulae). We write
$val_S(construct)$ for the evaluation of~$construct$ (a term or a
formula) in state~$S$ (under the given interpretation~$I$ of free
variables).

\begin{asm}
ReadLoc(x,S)= WriteLoc(x,S)= \emptyset \mbox{ for variables }x \\
ReadLoc(f(t_1, \ldots ,  t_n) ,S)=\+
           \{(f,(val_S(t_1), \ldots , val_S(t_n)))\} 
                 ~\cup ~\bigcup_{1 \leq i \leq n}ReadLoc(t_i,S)\-
WriteLoc(f(t_1 , \ldots , t_n),S)=\{(f,(val_S(t_1), \ldots , val_S(t_n)))\}
\end{asm}
Note that logical variables are not locations: they cannot be written
and their values are not stored in a location but in the given
interpretation~$I$ from where they can be retrieved.

We define $WriteLoc(\alpha,S)=\emptyset$ for every formula $\alpha$
because formulae are not locations one could write into.
$ReadLoc(\alpha,S)$ for atomic formulae $P(t_1 , \ldots , t_n )$ has to be
defined as for terms with $P$ playing the same role as a function
symbol~$f$. For propositional formulae one reads the locations of
their subformulae. In the inductive step for quantified formulae
$domain(S)$ denotes the superuniverse of~$S$
minus the Reserve set~\cite[Ch.2.4.4]{BoeSta03} and 
$I_{x}^{d}$ the extension (or modification) of~$I$ where~$x$ is
interpreted by a domain element~$d$.

\begin{asm}
ReadLoc(P(t_1 , \ldots , t_n ),S)= \+
   \{(P,(val_S(t_1), \ldots , val_S(t_n)))\} 
                   ~\cup ~ \bigcup_{1 \leq i \leq n}ReadLoc(t_i,S)\-
ReadLoc(\neg \alpha)= ReadLoc(\alpha)\\
ReadLoc(\alpha_1 \wedge \alpha_2) = ReadLoc(\alpha_1) \cup ReadLoc(\alpha_2)\\
ReadLoc(\forall x \alpha,S,I)= \bigcup_{d \in domain(S)}ReadLoc(\alpha,S,I_{x}^{d})
\end{asm}
\noindent Note that the values of the logical variables are not read
from a location but from the modified state environment function~$I_{x}^{d}$.

\subsubsection{Read/Write Locations of ASM Rules.}

\begin{asm}
ReadLoc(\SKIP,S)=WriteLoc(\SKIP,S)= \emptyset\\
ReadLoc(t_1 := t_2,S)= ReadLoc(t_1,S) \cup ReadLoc(t_2,S)\\
WriteLoc(t_1 := t_2,S)= WriteLoc(t_1,S)\\
ReadLoc(\IF \alpha \THEN r_1 \ELSE r_2,S) = \+
     ReadLoc(\alpha,S) \cup \left\{ 
           \begin{array}{ll}
             ReadLoc(r_1,S) & \IF val_S(\alpha)=true \\
             ReadLoc(r_2,S) & \ELSE 
            \end{array} \right.\-
WriteLoc(\IF \alpha \THEN r_1 \ELSE r_2,S) =
          \left\{ 
           \begin{array}{ll}
             WriteLoc(r_1,S) & \IF val_S(\alpha)=true\\                                     
             WriteLoc(r_2,S) & \ELSE 
           \end{array} \right. \\
ReadLoc(\LET x = t \IN r,S,I)= ReadLoc(t,S,I) \cup ReadLoc(r,S,I_x^{val_S(t)})\\
WriteLoc(\LET x = t \IN r,S,I)= WriteLoc(r,S,I_x^{val_S(t)}) \mbox{ // call by value}\\
ReadLoc(\FORALL x \WITH \alpha \DO r,S,I)=  \+
     ReadLoc(\forall x  \alpha,S,I) ~\cup ~ \bigcup_{a \in range(x,\alpha,S,I)}ReadLoc(r,S,I_x^a)\+
     \WHERE range(x,\alpha,S,I)= \{d \in domain(S) \mid val_{S,I_x^d}(\alpha) = true\}\dec\-
WriteLoc(\FORALL x \WITH \alpha \DO r,S,I)=  
     \bigcup_{a \in range(x,\alpha,S,I)}WriteLoc(r,S,I_x^a)\\
\end{asm}
In the following cases the same scheme applies to read and write
locations:\footnote{In $yields(r_1,S,I,U)$~$U$ denotes the update set
  produced by rule~$r_1$ in state~$S$ under~$I$.}
\begin{asm}
Read[Write]Loc(r_1 \PAR r_2,S)= \+
           Read[Write]Loc(r_1,S) \cup Read[Write]Loc(r_2,S)\-
Read[Write]Loc(r(t_1, \ldots, t_n),S) = Read[Write]Loc(P(x_1/t_1,\ldots,x_n/t_n),S)  \+
     \WHERE r(x_1,\ldots,x_n)=P \mbox{ // call by reference}\-
Read[Write]Loc(r_1 \SEQ r_2,S,I) = Read[Write]Loc(r_1,S,I) \cup \+
    \left\{ 
           \begin{array}{ll}
            Read[Write]Loc(r_2,S+U,I) & \IF
            yields(r_1,S,I,U) \AND Consistent(U)\\  
            \emptyset & \ELSE
     \end{array} \right.
\end{asm}
For $\CHOOSE$ rules we have to define the read and write locations
simultaneously to guarantee that the same instance satisfying the
selection condition is chosen for defining the read and write
locations of the rule body~$r$:
\begin{asm}
\IF range(x,\alpha,S,I)= \emptyset \THEN \+
   ReadLoc(\CHOOSE x \WITH \alpha \DO r,S,I)= ReadLoc(\exists x \alpha,S,I)\\
   WriteLoc(\CHOOSE x \WITH \alpha \DO r,S,I)=\emptyset \mbox{ // empty action}\-
\ELSE ~\CHOOSE a \in range(x,\alpha,S,I)\+
   ReadLoc(\CHOOSE x \WITH \alpha \DO r,S,I)= \+
          ReadLoc(\exists x \alpha,S,I) 
                  \cup ReadLoc(r,S,I_x^a)\-
   WriteLoc(\CHOOSE x \WITH \alpha \DO r,S,I)= WriteLoc(r,S,I_x^a)
\end{asm}

We say that~$M$ has or is committed (in state~$S_i$, denoted
$Committed(M,S_i)$) if step $\ASM{Commit}(M)$ has been performed (in
state~$S_i$).

\begin{definition}\rm

\ A run of $TA({\cal M},\ASM{TaCtl})$ is {\em serial} iff there is a
total order $<$ on $\mathcal{M}$ such that the following two
conditions are satisfied:

\begin{enumerate}

\item If in a state $M$ has committed, but $M^\prime$ has not, then
  $M < M^\prime$ holds.

\item If $M$ has committed in state $S_i$ and $M < M^\prime$ holds,
  then the cleansed schedule $\Delta_{j_1}(M^\prime)$,
  $\Delta_{j_2}(M^\prime), \dots$ of $M^\prime$ satisfies $i < j_1$.

\end{enumerate}

\end{definition}

That is, in a serial run all committed transactions are executed
in a total order and are followed by the updates of transactions that did not
yet commit.

\begin{definition}\rm

\ A run of $TA({\cal M},\ASM{TaCtl})$ is {\em serialisable} iff it is
equivalent to a serial run of $TA({\cal
  M},\ASM{TaCtl})$.\footnote{Modulo the fact that ASM steps permit
  simultaneous updates of multiple locations, this definition of
  serializability is equivalent to Lamport's sequential consistency
  concept~\cite{Lamport79}.}

\end{definition}

\begin{theorem}

\ Each run of $TA({\cal M},\ASM{TaCtl})$ is serialisable.

\end{theorem}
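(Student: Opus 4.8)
The plan is to fix an arbitrary run $S_0, S_1, S_2, \dots$ of $TA(\mathcal{M}, \ASM{TaCtl})$ and explicitly construct an equivalent serial run. The total order $<$ on $\mathcal{M}$ will be the commit order: set $M_{i_1} < M_{i_2} < \dots$ where $M_{i_j}$ is the $j$-th machine to execute $\ASM{Commit}(M)$ in the given run, with all transactions that never commit placed after all committed ones (in any order, say by registration time). I would then build the target run by reordering the cleansed schedules so that the entire cleansed schedule of $M_{i_1}$ comes first, then that of $M_{i_2}$, and so on, interleaving the $\ASM{TaCtl}$ updates in the now-forced positions and checking that the result is a legal run of $TA(\mathcal{M}, \ASM{TaCtl})$ with the same initial state $S_0$.

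The key steps, in order, are: (1) Establish the \emph{two-phase locking invariant} from the component definitions: once $M$ is granted a lock it holds it until $\ASM{Commit}(M)$, and $CannotBeGranted$ guarantees that at no state can two distinct transactions in $TransAct$ simultaneously hold conflicting locks (write--write or write--read) on the same location. (2) Show, using the colouring/recovery bookkeeping, that the \emph{cleansed} schedule of each $M$ consists only of proper $M$-updates performed while $M$ genuinely held the required locks, and that every location $M$ reads or writes in a cleansed step was $LockedBy(M)$ at that step — this is where the definitions of $NewLocksNeededBy$, $\ASM{Record}$, and the cleansing operation (ii) matching $\ASM{Undo}$ steps to their originating proper steps are used. (3) Derive the crucial \emph{conflict-freedom} claim: if $M \ne M'$ both appear in cleansed form and some location $\ell$ is in $WriteLoc$ of a cleansed $M$-step and in $ReadLoc \cup WriteLoc$ of a cleansed $M'$-step, then all cleansed $M$-steps precede all cleansed $M'$-steps in the original run, or vice versa — because holding the conflicting locks simultaneously is impossible and locks are released only at commit, so one transaction must have committed before the other acquired its lock. (4) Conclude that the commit order $<$ is consistent with all conflicts, hence the standard swapping argument (adjacent non-conflicting cleansed steps of different machines may be transposed without changing any read value or any update) transforms the cleansed run into the serial run ordered by $<$; verify the two clauses of the definition of \emph{serial} and that equivalence (same cleansed schedules, same read locations and read values) is preserved at every swap. (5) Invoke the closed-system assumption to ensure that reordering does not expose a machine to an environment value it could not otherwise see: every monitored/shared location read by $M_{i_j}$ is controlled/output by some earlier-or-concurrent $M_{i_k}$, and after cleansing the producing update sits at a well-defined point, so the read value in the serial run equals the read value in the original run.

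The main obstacle I expect is Step (3)–(4) done rigorously at the level of \emph{update sets} rather than single updates: an ASM step of $M$ may simultaneously touch several locations, and $\ASM{TaCtl}$ steps are interleaved, so "swapping adjacent steps" must be justified as swapping adjacent \emph{cleansed update sets} while maintaining consistency of the union $\Delta_i = \bigcup_M \Delta_i(M) \cup \Delta_i(\ASM{TaCtl})$ and the property that each $S_k$ in the new run is a genuine successor state. The delicate point is that the $\ASM{TaCtl}$ updates (lock bits, $history$, $TransAct$, $CommitRequest$) are shared, so after cleansing and reordering one must re-derive a coherent $\ASM{TaCtl}$ update stream; I would handle this by showing that, post-cleansing, every surviving $\ASM{TaCtl}$ update is either a successful $\ASM{GrantRequestedLocks}$/$\ASM{Commit}$ tied to exactly one transaction or an $\ASM{Append}$ to one $history(M)$, so these updates move rigidly with their owning transaction's block and never conflict across blocks once the blocks are serialized. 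Establishing that this rigid movement yields a well-formed run is the technical heart of the argument; the conflict-freedom from two-phase locking (Steps 1–2) is what makes it possible, and the closed-system assumption is what makes the read values survive the reordering.
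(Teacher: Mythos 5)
Your plan is essentially sound and rests on the same two load-bearing facts as the paper's proof (strict two-phase locking, i.e.\ locks held until commit, plus the observation that cleansing removes exactly the undone and lock-refused segments), but it takes a genuinely different route. The paper does \emph{not} use the classical swap-adjacent-non-conflicting-steps argument. Instead it inducts on the commit order by peeling off the first committer $M_1$: it re-runs $TA(\{M_1\},\ASM{TaCtl})$ standalone from $S_0$ to obtain a finite prefix $S_0,S_1',\dots,S_n'$, then defines a residual run of $TA({\cal M}-\{M_1\},\ASM{TaCtl})$ starting at $S_n'$ by deleting from the cleansed update sets everything done by or concerning $M_1$, and proves two lemmas: that the residual sequence is a legal run, and that the concatenation is equivalent to the original run. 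The advantage of the paper's decomposition is that the serial prefix is a legal run \emph{by construction} (it is just $M_1$ running alone), so one never has to justify individual transpositions or incrementally re-derive a coherent stream of $\ASM{TaCtl}$ updates (lock bits, $TransAct$, $history$) after each swap --- which you correctly identify as the technical heart and main risk of your approach. What your route buys is closer contact with the classical conflict-serializability theory and an explicit acyclicity/precedence-order statement; what it costs is exactly the delicate bookkeeping you flag in your last paragraph, which the paper's Lemma~1 handles once and globally rather than swap by swap.

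One caveat: your step (3) as literally stated is too strong. From strict 2PL you can conclude that the \emph{conflicting accesses} of $M$ and $M'$ are ordered consistently (one must commit before the other acquires the contested lock), but not that \emph{all} cleansed $M$-steps precede \emph{all} cleansed $M'$-steps: $M'$ may perform non-conflicting steps on other locations before $M$ even registers. The weaker property --- every conflict is ordered consistently with the commit order, and in particular no step conflicting with an access of the first committer $M_1$ can precede that access --- is what your step (4) actually needs, and it does hold (if $M'$ had touched a location before $M_1$ writes it, $M'$ would still hold its lock when $M_1$ requests the write lock, since $M_1$ commits first; this is precisely the contradiction the paper exploits in the proof of Lemma~2). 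So the plan survives, but step (3) should be restated in that weaker form before you rely on it.
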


\begin{proof}
\ Let $S_0,S_1,S_2,\dots$ be a run of $TA({\cal M},\ASM{TaCtl})$. To
construct an equivalent serial run let $M_1 \in \mathcal{M}$ be a
machine that commits first in this run, i.e. $Committed(M,S_i)$
holds for some $i$ and whenever $Committed(M,S_j)$ holds for some $M
\in \mathcal{M}$, then $i \le j$ holds. If there is more than one
machine $M_1$ with this property, we randomly choose one of them.

Take the run of $TA(\{ M_1 \},\ASM{TaCtl})$ starting in state $S_0$,
say $S_0, S_1^\prime, S_2^\prime, \dots, S_n^\prime$. As $M_1$
commits, this run is finite. $M_1$ has been $\ASM{Delete}$d from
$TransAct$ and none of the $\ASM{TaCtl}$ components is triggered any
more: neither $\ASM{Commit}$ nor $\ASM{LockHandler}$ because
$CommitRequest$ resp. $LockRequest$ remain empty; not
$\ASM{DeadlockHandler}$ because $Deadlock$ remains false since~$M_1$
never $Wait$s for any machine; not $\ASM{Recovery}$ because$Victim$
remains empty. Note that in this run the schedule for $M_1$ is already
cleansed.

We now define a run $S_0^{\prime\prime} , S_1^{\prime\prime},
S_2^{\prime\prime}, \dots$ (of $TA({\cal M} - \{ M_1 \},\ASM{TaCtl})$,
as has to be shown) which starts in the final state $S_n^\prime =
S_0^{\prime\prime}$ of the $TA(\{ M_1 \},\ASM{TaCtl})$ run and where
we remove from the run defined by the cleansed schedules $\Delta_i(M)$
for the originally given run all updates made by steps of~$M_1$ and
all updates in $\ASM{TaCtl}$ steps which concern~$M_1$. Let
\[ \Delta_i^{\prime\prime} = \bigcup\limits_{M \in \mathcal{M} - \{M_1\} } \Delta_i(M) \cup \{ (\ell,v) \in \Delta_i(\ASM{TaCtl}) \mid (\ell,v) \;\text{does not concern $M_1$} \} . \]

That is, in the update set $\Delta_i^{\prime\prime}$ all updates are removed from the
original run which are done by $M_1$---their effect is reflected
already in the initial run segment from $S_0$ to $S_n^\prime $---or
are $\ASM{LockHandler}$ updates involving a $LockRequest(M_1,L)$ or
are $Victim(M_1):=true$ updates of the $\ASM{DeadlockHandler}$ or are
updates involving a $\ASM{TryToRecover}(M_1)$ step or are done by
a step involving a $\ASM{Commit}(M_1)$.

\begin{lemma}\label{lem1}

\ $S_0^{\prime\prime}, S_1^{\prime\prime}, S_2^{\prime\prime}, \dots$
is a run of $TA({\cal M} - \{ M_1 \},\ASM{TaCtl})$.

\end{lemma}

\begin{lemma}\label{lem2}

\ The run $S_0, S_1^\prime, S_2^\prime, \dots, S_n^\prime,
S_1^{\prime\prime}, S_2^{\prime\prime}, \dots$ of $TA({\cal
  M},\ASM{TaCtl})$ is equivalent to the original run
$S_0,S_1,S_2,\dots$.

\end{lemma}

By induction hypothesis $S_0^{\prime\prime}, S_1^{\prime\prime},
S_2^{\prime\prime}, \dots$ is serialisable, so
$S_0,S_1^\prime,S_2^\prime,\dots$ and thereby also $S_0,S_1,S_2,\dots$ is
serialisable with $M_1 < M$ for all $M \in \mathcal{M} - \{ M_1 \}$.\hfill
\end{proof}

\begin{proof}\textbf{(Lemma \ref{lem1})}
\ We first show that omitting in $\Delta_i^{\prime\prime}$ every update
from $\Delta_i(\ASM{TaCtl})$ which concerns $M_1$ does not affect
updates by $\ASM{TaCtl}$ in $S_i^{\prime\prime}$ concerning $M \neq
M_1$. In fact starting in the final $M_1$-state $S_0^{\prime\prime}$,
$TA({\cal M} - \{ M_1 \},\ASM{TaCtl})$ makes no move with a
$Victim(M_1):=true$ update and no move of $\ASM{Commit}(M_1)$ or
$\ASM{HandleLockRequest}(M_1,L)$ or $\ASM{TryToRecover}(M_1)$

It remains to show that every $M$-step defined by
$\Delta_i^{\prime\prime}(M)$ is a possible $M$-step in a $TA({\cal M}
- \{ M_1 \},\ASM{TaCtl})$ run starting in $S_0''$.  Since the
considered $M$-schedule $\Delta_i(M)$ is cleansed, we only have to
consider any proper update step of $M$ in state~$S_i''$ (together with
its preceding lock request step, if any). If in~$S_i''$ $M$ uses
$newLocks$, in the run by the cleansed schedules for the original run
the locks must have been granted after the first $\ASM{Commit}$, which
is done for~$M_1$ before $S_0''$. Thus these locks are granted also
in~$S_i^{\prime\prime}$ as part of a $TA({\cal M} - \{ M_1
\},\ASM{TaCtl})$ run step. If no $newLocks$ are needed, that proper
$M$-step depends only on steps computed after $S_0''$ and thus is part
of a $TA({\cal M} - \{ M_1 \},\ASM{TaCtl})$ run step. \hfill
\end{proof}

\begin{proof}\textbf{(Lemma \ref{lem2})}
\ The cleansed machine schedules in the two runs, the read locations
and the values read there have to be shown to be the same.  First
consider any $M \not = M_1$. Since in the initial segment $S_0,
S_1^\prime, S_2^\prime, \dots, S_n^\prime$ no such~$M$ makes any move
so that its update sets in this computation segment are empty, in the
cleansed schedule of~$M$ for the run $S_0, S_1^\prime, S_2^\prime,
\dots, S_n^\prime, S_1^{\prime\prime}, S_2^{\prime\prime}, \dots$ all
these empty update sets disappear. Thus this cleansed schedule is the
same as the cleansed schedule of $M$ for the run
$S_n^\prime,S_1^{\prime\prime}, S_2^{\prime\prime}, \dots$ and
therefore by definition of $\Delta_i^{\prime\prime}(M) = \Delta_i(M)$
also for the original run $S_0,S_1,S_2,\dots$ with same read locations
and same values read there.

Now consider $M_1$, its schedule $\Delta_0(M_1), \Delta_1(M_1), \dots$
for the run $S_0,S_1,S_2,\dots$ and the corresponding cleansed
schedule $\Delta_{i_0}(M_1), \Delta_{i_1}(M_1), \Delta_{i_2}(M_1),
\dots$. We proceed by induction on the cleansed schedule steps of
$M_1$.  When $M_1$ makes its first step using the
$\Delta_{i_0}(M_1)$-updates, this can only be a proper~$M_1$-step
together with the corresponding $\ASM{Record}$ updates (or a lock
request directly preceding such a $\Delta_{i_1}(M_1)$-step) because in
the computation with cleansed schedule each lock request of $M_1$ is
granted and $M_1$ is not $Victim$ized. The values $M_1$ reads or
writes in this step (in private or locked locations) have not been
affected by a preceding step of any $M \not = M_1$---otherwise~$M$
would have locked before the non-private locations and keep the locks
until it commits (since cleansed schedules are without $\ASM{Undo}$ steps),
preventing $M_1$ from getting these locks which contradicts the fact
that $M_1$ is the first machine to commit and thus the first one to
get the locks. Therefore the values $M_1$ reads or writes in the step
defined by $\Delta_{i_0}(M_1)$ (resp. also $\Delta_{i_1}(M_1)$)
coincide with the corresponding location values in the first
(resp. also second) step of $M_1$ following the cleansed schedule to
pass from $S_0$ to $S_1^\prime$ (case without request of $newLocks$)
resp. from $S_0$ to $S_1^\prime$ to $S_2^\prime$ (otherwise). The same
argument applies in the inductive step which establishes the claim.\hfill
\end{proof}

\section{Conclusion}

In this article we specified (in terms of Abstract State Machines) a
transaction controller $\ASM{TaCtl}$ and a transaction operator which
turn the behaviour of a set of concurrent programs into a
transactional one under the control of $\ASM{TaCtl}$. In this way the
locations shared by the programs are accessed in a well-defined
manner. For this we proved that all concurrent transactional runs are
serialisable.

The relevance of the transaction operator is that it permits to
concentrate on the specification of program behavior ignoring any
problems resulting from the use of shared locations. That is,
specifications can be written in a way that shared locations are
treated as if they were exclusively used by a single program. This is
valuable for numerous applications, as shared locations (in
particular, locations in a database) are common, and random access to
them is hardly ever permitted.

Furthermore, by shifting transaction control into the rigorous
framework of Abstract State Machines we made several extensions to
transaction control as known from the area of databases
\cite{elmasri:2006}. In the classical theory schedules are sequences
containing read- and write-operations of the transactions plus the
corresponding read- and write-lock and commit events, i.e., only one
such operation or event is treated at a time. In our case we exploited
the inherent parallelism in ASM runs, so we always considered an
arbitrary update set with usually many updates at the same time. Under
these circumstances we generalised the notion of schedule and
serialisability in terms of the synchronous parallelism of ASMs. In
this way we stimulate also more parallelism in transactional systems.

Among further work we would like to be undertaken is to provide a
(proven to be correct) implementation of our transaction controller
and the $TA$ operator, in particular as plug-in for the
CoreASM~\cite{farahbod:scp2014,farahbod:fi2007} or Asmeta~\cite{arcaini:spe2011,gargantini:jucs2008}
simulation engines. We would also like to see refinements or
adaptations of our transaction controller model for different
approaches to serialisability~\cite{gray:1993}, see also the ASM-based
treatment of multi-level transaction control
in~\cite{kirchberg:2009}. Last but not least we would like to see
further detailings of our correctness proof to a mechanically verified
one, e.g. using the ASM theories developed in KIV (see~\cite{kiv} for
an extensive list of relevant publications) and
PVS~\cite{GarRic00,GoVhLa96,Verifix96} or the (Event-)B~\cite{Abrial96,Abrial10} theorem prover for an (Event-)B transformation of
$TA({\cal M},\ASM{TaCtl})$ (as suggested in~\cite{GlaHLR13}).

\subsubsection{Acknowledgement.} 
We thank Andrea Canciani and some of our referees for useful comments
to improve the paper.

\def\note#1{}

\bibliographystyle{abbrv}

\bibliography{TransactionAsmbib}

\end{document}